\newtheorem{lemma}{Lemma}
\newtheorem{theorem}{Theorem}
\renewcommand\vec{\boldsymbol}
\newcommand{\scal}[2]{\langle #1,#2\rangle}
\newcommand*\colvec[1]{
        \global\colveccount#1
        \begin{pmatrix}
        \colvecnext
}
\def\colvecnext#1{
        #1
        \global\advance\colveccount-1
        \ifnum\colveccount>0
                \\
                \expandafter\colvecnext
        \else
                \end{pmatrix}
        \fi
}
\begin{document}

\title{Beyond polarities: Generalizing Maxwell's reciprocal diagrams}
\author{Tamás Baranyai}
\maketitle

\section{Preface}
This is an updated version of the paper, containing the essential parts of the work.  As the non essential parts were mostly superficial and simplistic, they are omitted. 

\section{Notation and preliminaries}
\subsection{Basics}
We will use a linear algebraic description of projective transformations and dualities, which relies on the following factorisation. Consider vector-space $\mathbb{V}$ over the real numbers. The vectors $x,y\in \mathbb{V}$ will be called equivalent if $x=\alpha y$ for any $\alpha \in \mathbb{R}\setminus \{ 0 \} $. This will be denoted with $x\sim y$. Since this relation is an equivalence relation, it can be used to factorize vector spaces and attain what is called homogeneous coordinates \cite{pottmann2001computational}.

The three dimensional projective space ($PG(3)$) can be thought of as the union of Euclidean space and a plane at infinity. To describe points of $PG(3)$ we will use the following convention. If a point is in the Euclidean part with coordinates $\vec{a}\in \mathbb{R}^3$ then it is represented by the equivalence class of all vectors of form $(\lambda \vec{a},\lambda)$, while points at infinity are represented by equivalence classes of form $(\lambda \vec{a},0)$. (They will be treated as column vectors in the following.) Planes of $PG(3)$ are also represented with 4-vectors, in such way, that plane $p=(\vec{p},p_0)$ and point $a=(\vec{a},a_0)$ are incident (the plane contains the point) if and only if $\left\langle p,a \right\rangle=0$ (the usual scalar product in $\mathbb{R}^4$). We will deal with two types of transformations of these elements: projective transformations and dualities.

Projective transformations map points to points (planes to planes) preserving incidence and are describable with invertible matrices acting on vector-equivalence classes, such that transformation $P$ maps point $a$ to $a'=Pa$ and plane $p$ to $p'=P^{-T}p$. Note, that point $Pa$ and plane $P^{-T}p$ are incident if and only if $\scal{Pa}{P^{-T}p}=0=\scal{a}{p}$ holds, which is the incidence condition before the transformation. As the linear nature of the description suggests, we have equivalence classes of matrices representing transformations and there is a bijection between all possible projective transformations and all equivalence classes containing invertible matrices.

Dualities (in 3d) map points to planes, planes to points and lines to lines, such that the incidence relations are preserved (e.g. the dual planes of coplanar points intersect in a single point). They can also be described with the same equivalence classes of invertible matrices, but point $a$ is mapped to plane $\vec{D}a$ and plane $p$ is mapped to point $\vec{D}^{-T}p$. (The bold font is to distinguish between them and projective transformations.) Polarities are dualities which have periodicity of two, that is they satisfy  $\vec{D}^{-T}\vec{D}\sim Id$ (the 4 dimensional identity matrix).

\subsection{Projection ("camera") matrices}
The action of any projection from $PG(3)$ to $PG(2)$ can be described \cite{hartley2003multiple} with a $3\times4$ matrix (equivalence class) of rank 3, acting on the homogeneous coordinates of points. These are also called "camera" matrices, since they describe how a camera sees the world. In general, any $3\times4$ matrix of rank 3 is a camera matrix, but some of them correspond to cameras with "built in image processing". This happens for instance when pixels of a digital camera are not square, resulting in an additional distortion (affine transformation is in this case) of the image. Since we are interested in "theoretical" projections, we will restrict ourselves to a subset of projection matrices as follows.

\subsubsection{Finite pinhole cameras}
The basic camera, projecting through point $(\vec{0},0)$ onto plane $(0,0,\phi^{-1},1)$ is
\begin{align}
C_{\phi}=
\begin{bmatrix}
1 & 0 & 0 & 0\\
0 & 1 & 0 & 0\\
0 & 0 & \phi^{-1} & 0
\end{bmatrix}
\end{align}
where $\phi$ is called the focal distance. Any projection through a finite point to a finite plane can be given by applying a rotation and translation before projecting with $C_{\phi}$, that is the general form of our finite pinhole cameras will be
\begin{align}
C_f=C_{\phi}
\begin{bmatrix}
R_3 & \vec{t}\\
\vec{0}^T & 1
\end{bmatrix}\label{eq:kamera_veges}
\end{align}  
where $R_3$ is a 3 dimensional rotation matrix and $\vec{t}$ is the vector describing the translation.
\subsubsection{Parallel projections}
Similarly, the basic parallel projection along the $(0,0,1)$ axis is given by
\begin{align}
C_{\infty}=
\begin{bmatrix}
1 & 0 & 0 & 0\\
0 & 1 & 0 & 0\\
0 & 0 & 0 & 1
\end{bmatrix}.
\end{align}
while in case of finite cameras there is always a projecting ray orthogonal to the image plane, when projecting from points at infinity this may not be the case. The skewness of the image plane can be addressed with an additional stretching of the image in one direction with an axial affinity. Since the direction and location of the axis can be influenced by the choice of $R_3$ and $\vec{t}$, the general form of our parallel projections will be
\begin{align}
C_p=\begin{bmatrix}
1 & 0 & 0\\
0 & \psi & 0\\
0 & 0 & 1
\end{bmatrix} C_{\infty}
\begin{bmatrix}
R_3 & \vec{t}\\
\vec{0}^T & 1
\end{bmatrix}\label{eq:kamera_vegtelen}
\end{align}
with $\psi \in \mathbb{R} \setminus \{ 0 \}$ (here one component of $\vec{t}$ is cancelled out by the projection).

\subsubsection{Applicable projections}
Let $\mathcal{C}$ denote the set of matrices satisfying either \eqref{eq:kamera_veges} or \eqref{eq:kamera_vegtelen}. This set contains all the projection matrices allowed for our purpose.

\section{The generalized system}
Let us have polyhedron $\mathcal{P}$ such that its projection $\Gamma_1=C_1(\mathcal{P})$ on plane $j_1$ through point $i_1$ is the first diagram. Let $\mathcal{D}$ be the dual polyhedron of $\mathcal{P}$, according to some duality. The second diagram $\Gamma_2=C_2(\mathcal{D})$ is attained by projecting $\mathcal{D}$ from point $i_2$, to plane $j_2$. 

An example can be seen in Figure \ref{fig:mkep}, the whole process with homogenous coordinates is presented in Appendix \ref{sec:appa}. 
\begin{figure}[h]
\includegraphics[width=0.5\textwidth]{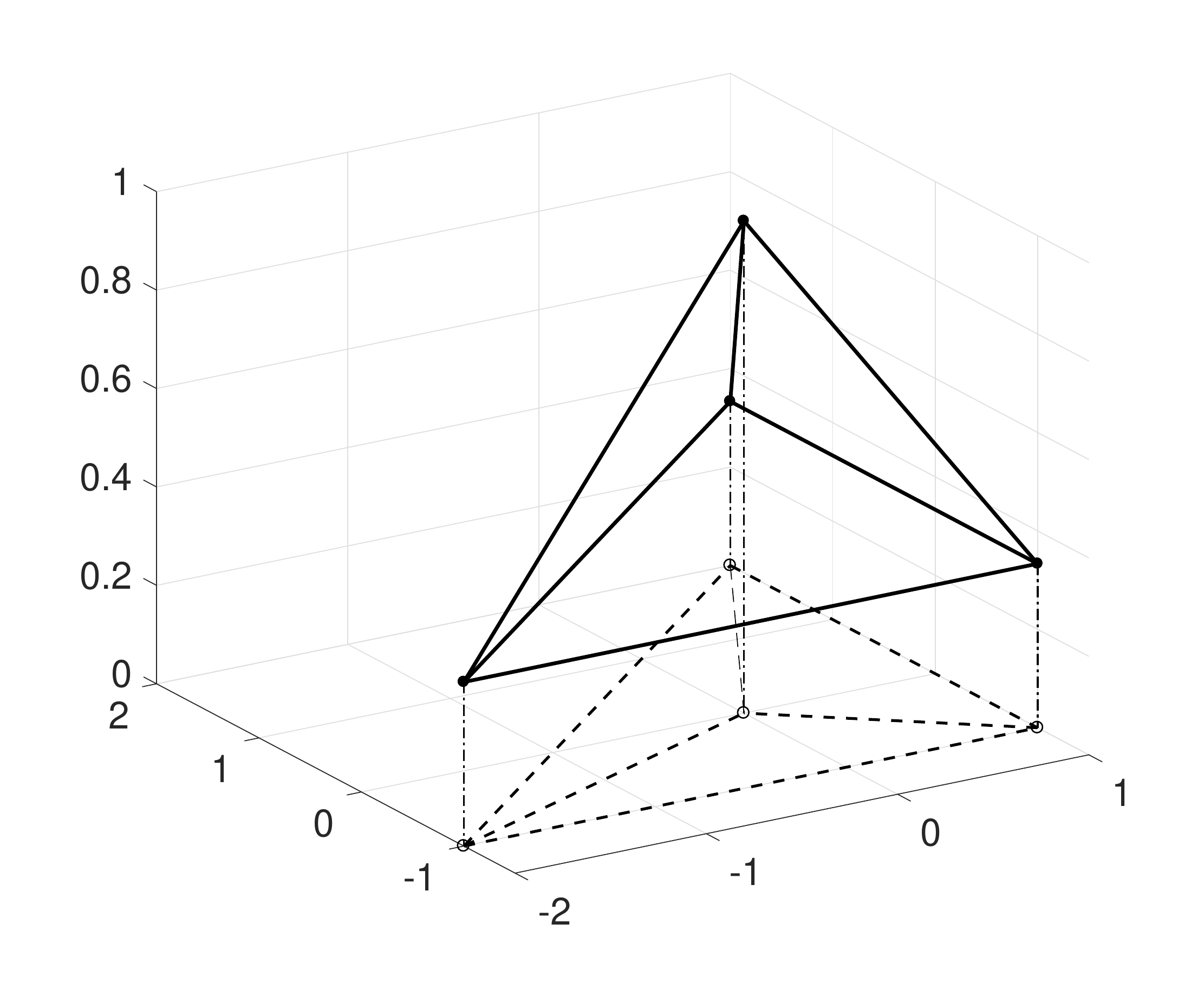}
\includegraphics[width=0.5\textwidth]{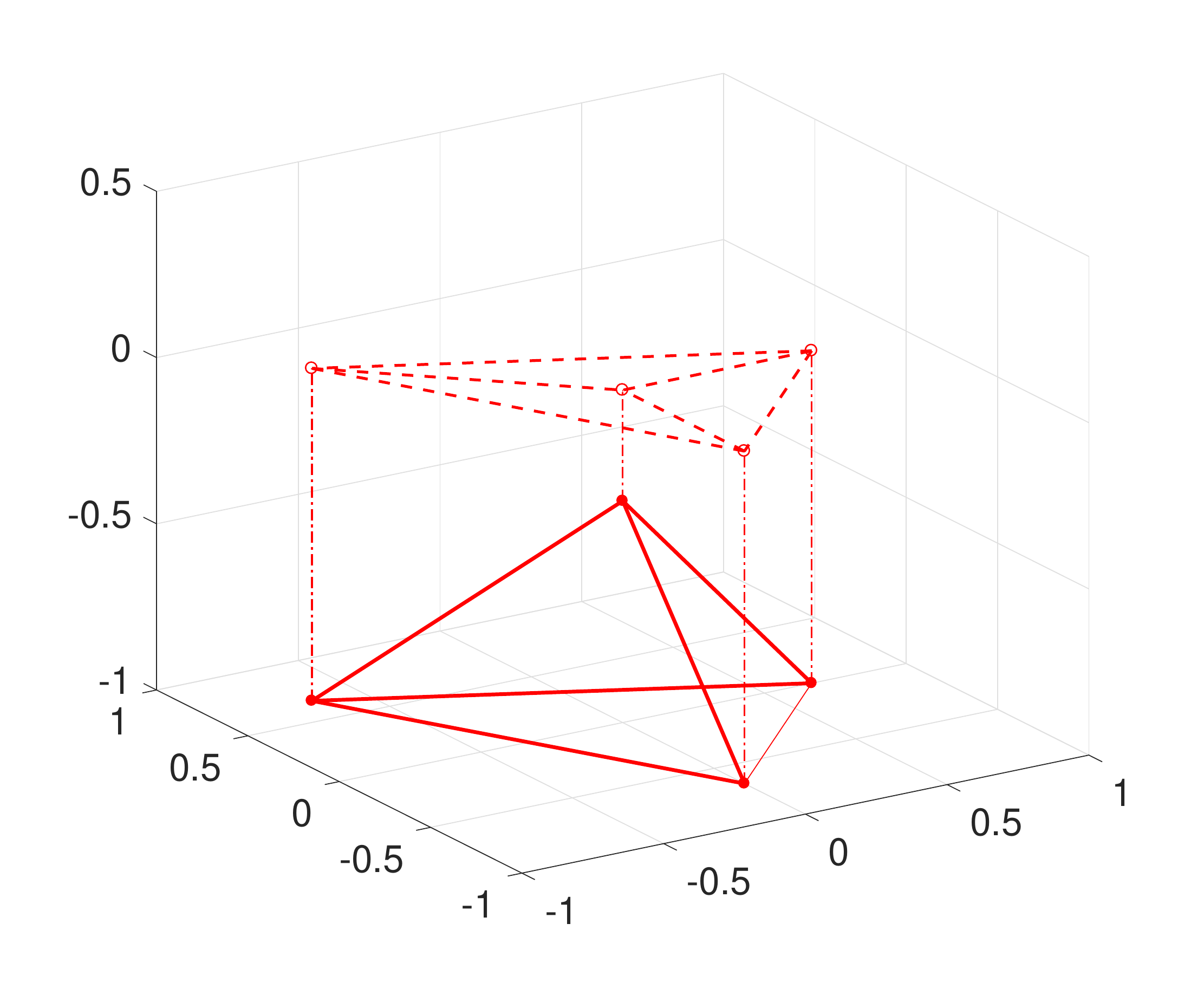}
\caption{Left: polyhedron $\mathcal{P}$ (solid lines), and its projecion. Right: polyhedron $\mathcal{D}$ (solid lines), and its projecion.}\label{fig:mkep}
\end{figure}

For the purpose of finding out whether a given duality can be used to get force diagram for a form diagram, we have to account for the fact that in graphical statics the user has freedom in choosing the precise location, scale and orientation of the images on the paper. In order to do this, we will accept solutions which are attainable from another solution through similarity transformations (amongst other things, this implies  no distinction between the so called "Maxwell-reciprocals" and "Cremona-reciprocals").  Since all projective transformations of $PG(2)$ are describable with invertible $3\times3$ matrices, for future reference let $\mathcal{S}$ denote the set of all matrices describing a similarity transformation of $PG(2)$.

When investigating the periodicity of the methods, the different iterations of the respective diagrams will have to be point-wise identical, not just similar to each other.  

\begin{theorem}\label{thm:jo}
Any projective duality can be used to get force and form diagrams and any of them has infinitely many pairs of projections to do so.
\end{theorem}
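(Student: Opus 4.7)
The plan is to split the argument into three stages: first verify the statement for one canonical polarity, then transfer it to an arbitrary duality through the natural action of projective transformations of $PG(3)$, and finally vary the construction to produce an infinite family of working camera pairs.

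For the canonical case I would fix the standard polarity $\vec{D}_0$ with matrix $Id$ and appeal to the Rankine--Maxwell construction: take $\mathcal{P}$ to be a generic convex polyhedron, form its polar $\mathcal{D}_0=\vec{D}_0(\mathcal{P})$, and choose $C_1,C_2\in\mathcal{C}$ to be the same orthogonal parallel projection onto a common horizontal plane. In the homogeneous-coordinate framework the verification reduces to checking, edge by edge, that $C_1$ applied to the point at infinity of an edge $e\subset\mathcal{P}$ equals, up to a nonzero scalar, $C_2$ applied to the point at infinity of $\vec{D}_0(e)\subset\mathcal{D}_0$; this is a direct computation since the duality is represented by the identity.

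To pass from $\vec{D}_0$ to an arbitrary duality $\vec{D}$ of matrix $\vec{M}$, I would factor $\vec{D}=T\circ\vec{D}_0$, where $T$ is the projective transformation of $PG(3)$ whose $4\times 4$ matrix is $\vec{M}^{-T}$. A short check then gives $\vec{D}(\mathcal{P})=T(\vec{D}_0(\mathcal{P}))$, so projecting $\vec{D}(\mathcal{P})$ through a camera $C$ is the same as projecting $\vec{D}_0(\mathcal{P})$ through the matrix product $C\,\vec{M}^{-T}$. Hence any working pair $(C_1,C_2^{\circ})$ for $\vec{D}_0$ produces a candidate pair $(C_1,\,C_2^{\circ}\vec{M}^{T})$ for $\vec{D}$ yielding identical diagrams. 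Once one working pair for $\vec{D}$ has been secured, the infinitude claim follows from three independent sources of perturbation: altering the translation $\vec{t}$ in either camera leaves edge directions invariant; uniform rescalings of focal length lie in the similarity group $\mathcal{S}$ and are therefore absorbed; and a global projective transformation $G$ of $PG(3)$ applied compatibly to $\mathcal{P}$, $\vec{D}(\mathcal{P})$, $C_1$ and $C_2$ produces another valid pair. Each of these already furnishes a positive-dimensional family of solutions.

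The principal obstacle I expect lies in the transfer step: the candidate camera $C_2^{\circ}\vec{M}^{T}$ is a generic rank-$3$ matrix of size $3\times4$, whereas $\mathcal{C}$ is restricted to the forms \eqref{eq:kamera_veges}--\eqref{eq:kamera_vegtelen}, which exclude most in-plane affine distortions. The natural way around this is to use the family freedom on the $\vec{D}_0$-side simultaneously: rather than fixing a single base pair, one selects from the family of working pairs for $\vec{D}_0$ one whose image $C_2^{\circ}\vec{M}^{T}$, after multiplication by an in-plane similarity from $\mathcal{S}$, lands in $\mathcal{C}$. Establishing that this adjustment is always possible, uniformly in $\vec{M}$, is the crux of the proof; failing a single general argument, a case split between the finite pinhole and parallel projection subfamilies of $\mathcal{C}$ should close the remaining gap.
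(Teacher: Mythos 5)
Your overall strategy --- verify the claim for one canonical duality and then transfer to an arbitrary $\vec{D}$ by absorbing a projective transformation into the second camera --- is the same reduction the paper uses, but the step you yourself flag as the crux is a genuine gap, and the fix you sketch cannot work as stated. The transferred camera $C_2^{\circ}\vec{M}^{T}$ differs from a member of $\mathcal{C}$ by a general planar affine (indeed possibly projective) distortion of the image, whereas the group $\mathcal{S}$ of similarities you propose to adjust by is far too small: an axial affinity (unequal scaling in two directions) is never a similarity, so no amount of choosing a different base pair and post-composing with elements of $\mathcal{S}$ will land you in $\mathcal{C}$ for a generic $\vec{M}$. The missing idea is geometric rather than group-theoretic: an axial affinity of the image can be undone by \emph{tilting the image plane}, i.e.\ replacing the image plane by one rotated about the axis of affinity, because the foreshortening of the inclined plane exactly compensates the unequal scaling. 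And to guarantee that the residual distortion is affine (not fully projective) in the first place, the paper factors $\vec{D}=V^{T}\vec{P}U$ (an $LPU$-type decomposition), so that the triangular factors act affinely on the two polyhedra and the permutation core $\vec{P}$ is either the identity or Maxwell's duality $\vec{A}$ (up to a relabelling of axes), each of which is covered by a separate lemma. Your proposal contains no analogue of either ingredient, and without them the "case split between the finite pinhole and parallel projection subfamilies" you invoke has nothing to work with.

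The infinitude argument is also weaker than what is needed. Translating a camera or uniformly rescaling its focal length produces images that differ from the originals only by a similarity, which is precisely the trivial kind of multiplicity the paper sets aside; and your "global projective transformation $G$" variation runs into the same $\mathcal{C}$-membership obstruction as the transfer step, since $C_1G^{-1}$ need not lie in $\mathcal{C}$. The paper instead obtains a genuine one-parameter family of distinct cameras producing \emph{identical} (not merely similar) images by writing the relevant duality as $e^{\alpha\vec{B}}\vec{B}e^{(\pi/2-\alpha)\vec{B}}$ for the null-polarity $\vec{B}$ and letting $\alpha$ range over $\mathbb{R}\ (\mathrm{mod}\ \pi)$, which simultaneously moves the projection centres and changes the focal distances. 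You would need some replacement for this construction to justify "infinitely many pairs" in a non-vacuous sense.
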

Before we go on and prove this, we will need the following two lemmas:

\begin{lemma}\label{lem:inf}
There are infinitely many pairs of applicable projections with which the canonical duality (represented by the 4 dimensional identity matrix) can be used to give form and force diagrams.
\end{lemma}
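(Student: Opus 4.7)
The plan is to establish the lemma in two steps: first exhibit one valid configuration using the canonical duality, then generate infinitely many valid pairs of applicable projections by continuously deforming it.

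For the existence step, I would pick an explicit polyhedron $\mathcal{P}$ in general position (for instance a tetrahedron or triangular prism whose faces all miss the origin, so that the canonical dual $\mathcal{D}$ has finite vertices) together with a pair of projections $C_1, C_2 \in \mathcal{C}$, for concreteness a pair of basic parallel projections along the $z$-axis. I would then verify reciprocity directly: for every pair of adjacent faces $f_1, f_2$ of $\mathcal{P}$ giving an edge $e$ of $\mathcal{P}$ and the corresponding edge $e^{*}$ of $\mathcal{D}$ (connecting the dual vertices $\vec{D}^{-T} f_1$ and $\vec{D}^{-T} f_2$), I would check that $C_1(e)$ and $C_2(e^{*})$ are parallel in $PG(2)$ up to a fixed similarity. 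Under the identity duality and with a concrete $\mathcal{P}$ this reduces to a finite linear-algebraic computation.

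For the parameterization step, I would exploit the projective covariance of the construction. Applying an invertible $T \in GL(4)$ to $\mathcal{P}$ replaces the canonical dual by $T^{-T}\mathcal{D}$; the pair of diagrams is unchanged if one simultaneously replaces $C_1$ by $C_1 T^{-1}$ and $C_2$ by $C_2 T^{T}$, so the new pair is again valid. I would then restrict to transformations of the form $T=\begin{bmatrix} R & \vec{0} \\ \vec{0}^T & 1 \end{bmatrix}$ with $R \in SO(3)$, for which $T^{-1}=T^{T}$ is itself a rigid motion and both transformed cameras remain in $\mathcal{C}$. Any one-parameter subgroup of rotations about a fixed axis then produces infinitely many distinct pairs of applicable projections, proving the lemma. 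As alternatives one could vary the focal distance $\phi$ of one camera while rescaling $\mathcal{P}$ along the corresponding optical axis, or translate $\mathcal{P}$ along a direction compatible with both camera forms.

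The main obstacle is the existence step: although elementary, the reciprocity check involves tracing edge directions through the canonical duality and through both projections, which requires some bookkeeping with homogeneous coordinates and a careful choice of $\mathcal{P}$ so that no accidental degeneracies occur. Once the base case is in hand, the covariance argument delivers the one-parameter family essentially for free, modulo a brief case analysis to confirm that the transformed projections remain in $\mathcal{C}$ and produce genuinely distinct matrices.
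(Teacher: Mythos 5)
There is a genuine gap in your existence step, and it is not a matter of bookkeeping or of avoiding degenerate polyhedra: the base configuration you propose fails the reciprocity check for \emph{every} choice of $\mathcal{P}$. Under the canonical duality a face $f=(\vec{n},c)$ of $\mathcal{P}$ becomes the vertex with coordinates $(\vec{n},c)$ of $\mathcal{D}$, so the edge $e=f_1\cap f_2$ has direction $\vec{n}_1\times\vec{n}_2$ while the dual edge $e^{*}$ has direction $c_2\vec{n}_1-c_1\vec{n}_2$, which lies in the span of $\vec{n}_1$ and $\vec{n}_2$. These two directions are orthogonal in space, but orthogonality is destroyed by a parallel projection onto $z=0$, and the projected angle is not even constant across edges: for the faces $x+z=1$, $y+z=1$ and $x+y=1$ of a tetrahedron, the first two faces give projected directions $(-1,-1)$ and $(-1,1)$ (perpendicular), while the first and third give $(-1,1)$ and $(0,1)$ (a $45^\circ$ angle). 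Hence no similarity of one image onto the other exists, and the ``finite linear-algebraic computation'' you defer to would come out negative. Finding cameras that \emph{do} work with the identity duality is precisely the content of the lemma. The paper obtains them by conjugating a configuration known to work --- Cremona's null-polarity $\vec{B}$ with two copies of $C_\infty$ --- using the identity $e^{\pi/2\vec{B}}=\vec{B}$ to write $\vec{Id}=e^{\alpha\vec{B}}\vec{B}e^{(\pi/2-\alpha)\vec{B}}$; the resulting cameras are $C_\infty e^{\alpha\vec{B}}$ and $C_\infty e^{(\alpha-\pi/2)\vec{B}}$, and for no value of $\alpha$ are both of these parallel projections (their centres are $(0,0,\cos\alpha,-\sin\alpha)$ and $(0,0,\sin\alpha,\cos\alpha)$, which cannot both lie at infinity).

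Your covariance step is correct as algebra, but it cannot repair this: conjugating by a rigid motion preserves the failure above, since it carries a non-reciprocal pair of images to a congruent non-reciprocal pair. Moreover, even granted a working base case, rotating the whole configuration only produces congruent copies of a single solution, which is a much weaker family than the paper's: there $\alpha$ genuinely changes the focal distance and even the type of the projection (parallel versus central) while the two images remain literally identical, and it is this specific one-parameter structure that the proof of Theorem \ref{thm:jo} later relies on.
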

\begin{proof}
In order to show this, we will reduce the problem to a method which is known to work. One usage of Cremona's method works by having Polyhedra $\mathcal{P}$ and $\mathcal{D}$ such that points of $\mathcal{P}$ are mapped to planes of $\mathcal{D}$ by duality (null-polarity) 
\begin{align}
\vec{B}=
\begin{bmatrix}
0 & -1 & 0 & 0 \\
1 & 0 & 0 & 0 \\
0 & 0 & 0 & -1\\
0 & 0 & 1 & 0
\end{bmatrix}
\end{align}
 while the two diagrams are 
 \begin{align}
 \Gamma_1&=C_{\infty}(\mathcal{P})\\
 \Gamma_2&=C_{\infty}(\mathcal{D}).
\end{align} 
Let us note, that 
\begin{align}
e^{\pi/2\vec{B}}=\vec{B}
\end{align}
and
\begin{align}
e^{\pi\vec{B}}=-\vec{Id}\sim\vec{Id}
\end{align}
(the usual matrix exponential). Using this we can decompose the canonical duality as
\begin{align}
\vec{Id}=e^{\alpha\vec{B}} \vec{B} e^{(\pi/2-\alpha)\vec{B}}\label{eq:decompose}
\end{align}
with $\alpha\in \mathbb{R}$ (mod $\pi$). Let polyhedra $\mathcal{P}^*$ and $\mathcal{D}^*$ be defined as
\begin{align}
\mathcal{P}^*:&=e^{-\alpha\vec{B}}(\mathcal{P})\\
\mathcal{D}^*:&=e^{(\pi/2-\alpha)\vec{B}}(\mathcal{D})
\end{align} 
(note that $(e^{(pi/2-\alpha)\vec{B}})^{-T}=e^{(pi/2-\alpha)\vec{B}}$). It can be seen from \eqref{eq:decompose} that points of $\mathcal{P}^*$ are mapped to planes of $\mathcal{D}^*$ by the canonical duality. We can define $C_{\alpha}$ and $C_{\pi/2-\alpha}$ as
\begin{align}
C_{\alpha}&:=C_\infty e^{\alpha\vec{B}}\\
C_{(\alpha-\pi/2)}:&=C_\infty e^{(\alpha-\pi/2)\vec{B}}.
\end{align}
Note that $C_{\alpha}(\mathcal{P}^*)=C_\infty(\mathcal{P})$ and $C_{(\alpha-\pi/2)}(\mathcal{D}^*)=C_\infty(\mathcal{D})$ by definition. It is trivial to show that $C_\alpha\in\mathcal{C}$ for all $\alpha \in \mathbb{R} \ (mod \ \pi)$. The fact that there are infinitely many numbers in $\mathbb{R} \ (mod \ \pi)$ completes the proof.
\end{proof}

Also, this gives exactly the same images, the infinite number of solutions do not rely on similarity transformations in $\mathcal{S}$. Furthermore, $\alpha$ changes the internal parameters of the cameras (focal distance) as well as the centre of projection. 

\begin{lemma}\label{lem:maxwell}
There are infinitely many pairs of applicable projections with which the duality used by Maxwell can be used to give form and force diagrams.
\end{lemma}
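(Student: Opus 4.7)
The plan is to mirror the strategy of Lemma \ref{lem:inf}, but starting from Maxwell's own construction --- which is known to work with both diagrams produced by the parallel projection $C_\infty$ along the axis of Maxwell's paraboloid --- and producing a continuous family of alternative camera pairs by acting on the two polyhedra with transformations that preserve Maxwell's polarity $\vec{M}$ (symmetric, so that $\vec{M}^{-T}\vec{M} \sim \vec{Id}$).

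First I will record the basic observation that replacing $\mathcal{P}$ by $T_P \mathcal{P}$ and $\mathcal{D}$ by $T_D \mathcal{D}$ keeps the $\vec{M}$-duality intact precisely when $T_D^T \vec{M} T_P \sim \vec{M}$, so that $T_P$ can be chosen freely and $T_D = \vec{M}^{-1} T_P^{-T} \vec{M}$ is forced up to a scalar. In order to reproduce the original two diagrams the cameras must then be modified to $C_1^{*} := C_\infty T_P^{-1}$ and $C_2^{*} := C_\infty T_D^{-1}$; the whole question reduces to whether both of these lie in $\mathcal{C}$.

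For the one-parameter family itself I will substitute $T_P := e^{\alpha \vec{B}}$, the same group used in Lemma \ref{lem:inf}. Conjugation yields $T_D = e^{\alpha \vec{B}'}$ with $\vec{B}' := \vec{M}^{-1} \vec{B} \vec{M}$, and the duality invariance $e^{-\alpha \vec{B}'} \vec{M} e^{\alpha \vec{B}} = \vec{M}$ is then automatic. The first new camera $C_1^{*} = C_\infty e^{-\alpha \vec{B}} = C_{-\alpha}$ was already shown to be applicable in the proof of Lemma \ref{lem:inf}.

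The hard step will be checking that $C_2^{*} = C_\infty e^{-\alpha \vec{B}'}$ is applicable as well. I expect this to follow once $\vec{M}$ is written in Maxwell's coordinates and one verifies that the conjugate $\vec{B}'$ inherits from $\vec{B}$ the crucial property $(\vec{B}')^2 \sim -\vec{Id}$. Under that identity, $e^{-\alpha \vec{B}'} = \cos\alpha\,\vec{Id} - \sin\alpha\,\vec{B}'$ splits into two planar rotation blocks exactly as in Lemma \ref{lem:inf}, so that $C_\infty e^{-\alpha \vec{B}'}$ matches the finite pinhole form \eqref{eq:kamera_veges} with focal length $1/\sin\alpha$, a translation along the optical axis by $\cot\alpha$, and a rotation of the image plane. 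Since $\alpha \in \mathbb{R} \ (\mathrm{mod} \ \pi)$ is a continuous parameter, this produces infinitely many genuinely different applicable projection pairs, as claimed.
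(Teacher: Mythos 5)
Your proposal is correct and is essentially the paper's own proof: your $T_D = \vec{M}^{-1}T_P^{-T}\vec{M} = e^{\alpha\vec{B}'}$ with $\vec{B}' = \vec{A}\vec{B}\vec{A}$ is exactly the paper's transformation $Ae^{\alpha\vec{B}}A$ applied to $\mathcal{D}$, and your camera $C_2^{*} = C_\infty e^{-\alpha\vec{B}'}$ is the paper's $C_{\alpha'}$ (up to the sign convention on $\alpha$). The only difference is that you explicitly verify $(\vec{B}')^2=-\vec{Id}$ and identify the focal length $1/\sin\alpha$ and axial translation $\cot\alpha$, which supplies the detail behind the paper's ``the fact that $C_{\alpha'}\in\mathcal{C}$ completes the proof.''
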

\begin{proof}
Recall (or see Appendix \ref{sec:appa}) how Maxwell's method works by having Polyhedra $\mathcal{P}$ and $\mathcal{D}$ such that points of $\mathcal{P}$ are mapped to planes of $\mathcal{D}$ by duality 
\begin{align}
\vec{A}=
\begin{bmatrix}
1 & 0 & 0 & 0 \\
0 & 1 & 0 & 0 \\
0 & 0 & 0 & 1\\
0 & 0 & 1 & 0
\end{bmatrix}
\end{align}
 while the two diagrams are 
 \begin{align}
 \Gamma_1=C_{\infty}(\mathcal{P})\\
 \Gamma_2=C_{\infty}(\mathcal{D}).
\end{align} 

We can use $C_{\alpha}$ and $\mathcal{P}^*$ from the previous proof to get a family of projections and corresponding polyhedra. Let us define $\mathcal{D}'$ and $C_{\alpha'}$ as
\begin{align}
\mathcal{D}':=Ae^{\alpha\vec{B}}A(\mathcal{D})\\
C_{\alpha'}:=Ae^{-\alpha\vec{B}}A
\end{align}
after which it is not hard to see that 
\begin{align}
C_{\alpha'}(\mathcal{D}')=C_{\infty}(\mathcal{D})
\end{align}
and the fact that $C_{\alpha'}\in \mathcal{C}$ completes the proof.
\end{proof}

\begin{proof}[Proof of Theorem \ref{thm:jo}]
Recall how any matrix has an $LPU$ decomposition \cite{carrell2017groups} where $L$ and $U$ are lower and upper triangular matrices respectively, and $P$ is a permutation matrix.
We can have an analogous decomposition of duality $\vec{D}$ as
\begin{align}
\vec{D}=V^T\vec{P}U\label{eq:dc}
\end{align}
with the following 3 cases
\begin{itemize}
\item[a)] $\vec{P}=\vec{Id}$
\item[b)] $\vec{P}$ describes a permutation of axes in $\mathbb{R}^3$
\item[c)] $\vec{P}=\vec{A}$ with some additional permutation of axes in $\mathbb{R}^3$ if necessary.
\end{itemize}
\paragraph{In case a)} using the notation of Lemma \ref{lem:inf} we are looking for a family of projections $C_{\beta}$ and $C_{\gamma}$  satisfying $C_{\alpha}(\mathcal{P^*})=C_{\beta}(U^{-1}(\mathcal{P^*}))$ and $C_{(\alpha-\pi/2)}(\mathcal{D^*})=C_{\gamma}(V^{-1}(\mathcal{D^*}))$. Note, how $U^{-1}$ and $V^{-1}$ are affine transformations due to their upper triangular nature. (The transposition of the lower triangular matrix is necessary due to the fact that it describes an operation on plane-coordinates.) Now let us denote the projection centre of $C_{\alpha}$ with $i$ and the image plane of it with $k$. We can take the affine transform of the entire system, resulting in projecting $U^{-1}(\mathcal{P^*})$ through point $U^{-1}i$ to image plane $U^Tk$ (in homogeneous coordinates). Since the restriction of an affine transformation to its subspaces is also an affine transformation, what we see on $U^Tk$ is an affine image of  $C_{\alpha}(\mathcal{P^*})$. Any affine transformation of $PG(2)$ can be decomposed into a similarity transformation and an axial affinity. Since we accept similar images as solutions, we only have to undo the effect of the axial affinity, which can be done by introducing a rotated image plane, intersecting $U^Tk$ in the axis of affinity. With this we have shown the existence of the family $C_{\beta}\in\mathcal{C}$satisfying $C_{\alpha}(\mathcal{P^*})=C_{\beta}(U^{-1}(\mathcal{P^*}))$. The existence of $C_{\gamma}\in\mathcal{C}$ satisfying $C_{(\alpha-\pi/2)}(\mathcal{D^*})=C_{\gamma}(V^{-1}(\mathcal{D^*}))$ can be seen through similar reasoning. 
\paragraph{In case b)} the fact that we have to add an additional relabelling of the axes to account for the permutation of axes in in $\mathbb{R}^3$ does not change the validity of the proof.
\paragraph{In case c)} we have to rely on the families $C_{\alpha}$ and $C_{\alpha'}$, whose existence is shown in Lemma \ref{lem:maxwell}, otherwise the proof is the same.
\end{proof} 
This concludes the proof of Theorem \ref{thm:jo}. An illustration can be seen in Figure \ref{fig:affin}. The underlying numerics is presented in Appendix \ref{sec:appc}.

\begin{figure}[h]
\includegraphics[width=0.5\textwidth]{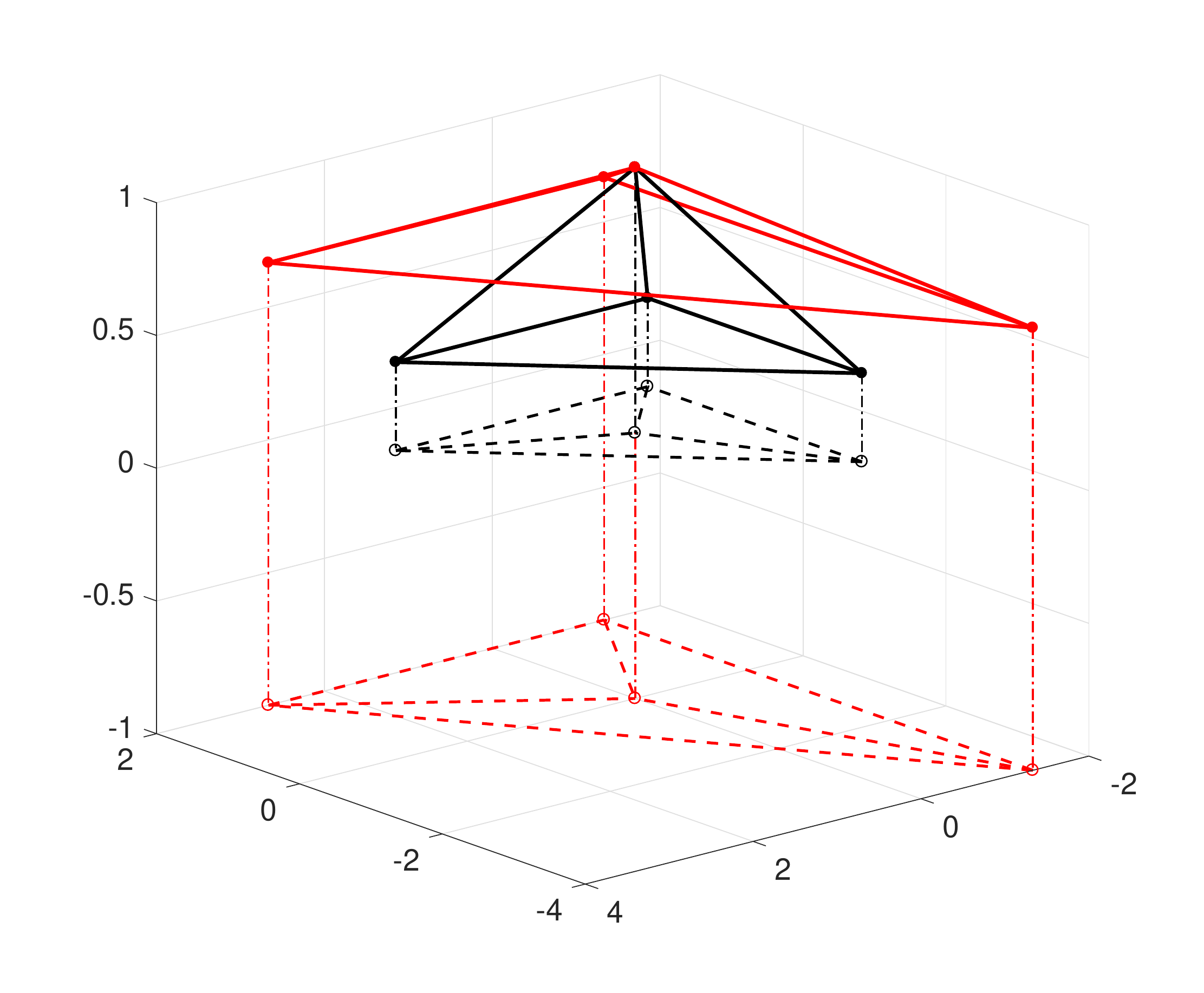}
\includegraphics[width=0.5\textwidth]{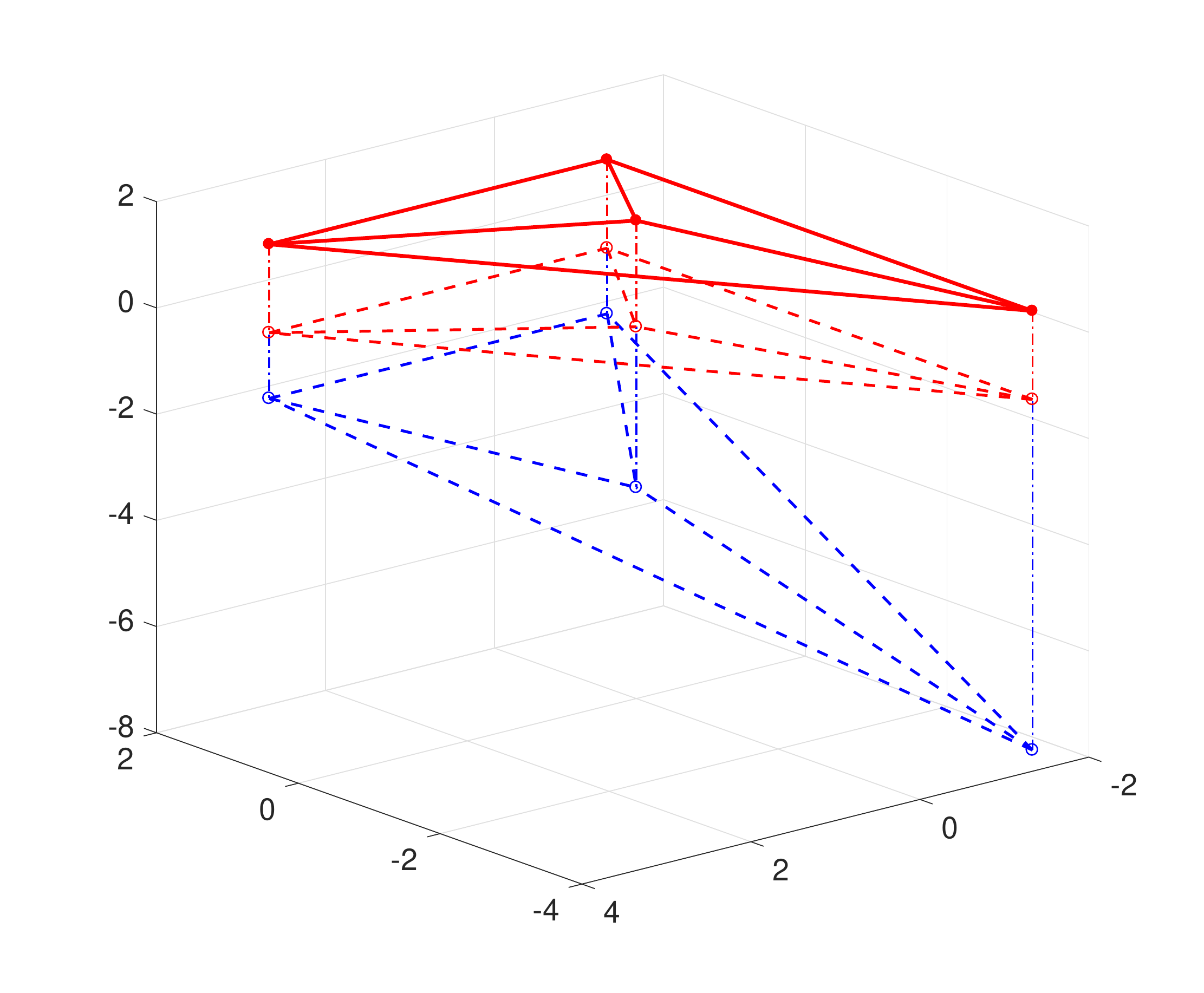}
\caption{Left: polyhedron $\mathcal{P}$ (black),  $U^{-1}(\mathcal{P})$ (red) as well as their projecions. Right: polyhedron $U^{-1}(\mathcal{P})$, its projection (red), and the corrected image on a skew plane (blue).The corrected (blue) projection of $U^{-1}(\mathcal{P})$ is a uniformly scaled up version of the starting (black) projection of $\mathcal{P}$.}\label{fig:affin}
\end{figure}

\section{Periodicity}
Maxwell called the two diagrams reciprocal, because for any one of them the method he has given produced the other diagram. If we want to generalize this, we need a method which consists of a single camera matrix and a duality which is always iterated forwards. (As opposed to the construction given in the proof of Theorem \ref{thm:jo}, where we might need to invert the duality depending on which image plane we start with.) Also, we will not accept solutions differing in similarity transformations, only point-wise periodicity of the corresponding diagrams. Furthermore, we will treat the reciprocity condition independently, meaning that the fact that a duality and corresponding projection satisfy it does not mean the two diagrams are each other's solutions in the mechanical sense. In order to formalize this, let us start with polyhedrons $\mathcal{P}$, $\mathcal{D}$ and camera $C$ such that the two diagrams are $\Gamma_1=C(\mathcal{P})$ and $\Gamma_2=C(\mathcal{D})$. Let us denote the projection center of $C$ with $i$, the image plane of the projection with $j$ and the embedding of $\Gamma_1 \in PG(2)$ and $\Gamma_2 \in PG(2)$ into $PG(3)$ with $\bar{\mathcal{P}}$ and $\bar{\mathcal{D}}$ respectively (see Appendix \ref{sec:appa} or \ref{sec:appb}). It is important to note that $i$ and $\bar{\mathcal{P}}$ does not determine $\mathcal{P}$ uniquely, only $p_k=i+\lambda_k \bar{p}_k$ has to hold, for some $\lambda_k\in \mathbb{R}\setminus \{ 0 \}$ (where $p_k\in \mathcal{P}$ and $\bar{p}_k \in \bar{\mathcal{P}}$). After this set-up, we can finally address the reciprocity condition with 

\begin{theorem}\label{thm:period}
The reciprocity condition is equivalent with $D^{-T}D$ (the duality iterated twice, mapping points to points) being a central-axial collineation.
\end{theorem}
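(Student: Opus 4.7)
The plan is to trace carefully what the forward iteration of the method does to $\Gamma_2$ and to recognise the resulting geometric condition as the definition of a central-axial collineation, then to argue both implications.

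First I would unpack the iterated method. Starting from $\Gamma_2$ and lifting it to $\bar{\mathcal{D}}$ on plane $j$, any admissible polyhedron $\mathcal{D}^*$ has vertices of the form $d_m^* \sim \mu_m d_m$ in homogeneous coordinates (same rays through $i$, only the scalars change). Because $D$ is linear on homogeneous coordinates, the image planes $D d_m^* \sim D d_m$ do not depend on the lift. Writing $d_m \sim D^{-T}\pi_m$ for the original faces $\pi_m$ of $\mathcal{P}$, the faces of the new polyhedron $\mathcal{E}$ are the planes $D D^{-T}\pi_m$. Since an action on planes by $M^{-T}$ corresponds to an action on points by $M$, the vertices of $\mathcal{E}$ are exactly the points $D^{-T}D\,v_k$, where $v_k$ are the vertices of $\mathcal{P}$.

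Second I would translate the reciprocity condition into a collinearity statement. Pointwise equality $C(\mathcal{E}) = C(\mathcal{P})$ means $C(D^{-T}D\,v_k) = C(v_k)$ for every vertex; two points project identically from $i$ onto $j$ iff they are collinear with $i$. Hence reciprocity is equivalent to the assertion that $v_k$, $i$ and $D^{-T}D\,v_k$ are collinear for every vertex $v_k$. I would then promote this to a statement about the projectivity $D^{-T}D$ itself: since the lift of $\mathcal{P}$ is free in the scalars $\lambda_k$ and the reciprocity condition is being treated as intrinsic to the pair $(D,C)$, the collinearity must hold on an open subset of each ray through $i$, and by algebraic extension on all of $PG(3)$.

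Third I would invoke the characterisation of central-axial collineations. A projectivity $T$ of $PG(3)$ sends every $x$ to a point of $\overline{ix}$ precisely when $i$ is a centre for $T$; any non-identity such projectivity automatically fixes a plane pointwise (its axis), so ``has centre $i$'' and ``is a central-axial collineation (with centre $i$)'' coincide. Combining with the previous step gives the $(\Rightarrow)$ direction. The $(\Leftarrow)$ direction is immediate: if $D^{-T}D$ is a central-axial collineation with centre $i$, then every vertex $v_k$ maps to a point on $\overline{i v_k}$ and hence projects to the same point on $j$, so $C(\mathcal{E})=C(\mathcal{P})$ point-wise.

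The main obstacle I anticipate is the genericity step, going from collinearity at the finitely many vertices of a specific $\mathcal{P}$ to collinearity at every point of $PG(3)$. The cleanest way is to note that the $\lambda_k$-freedom already lets each vertex range over an open subset of a ray through $i$, so the reciprocity condition (interpreted as intrinsic to $(D,C)$) constrains $D^{-T}D$ on an open subset of $PG(3)$; a projective transformation satisfying the collinearity on an open set satisfies it everywhere. Once this is pinned down, the rest of the argument is a direct translation between the coordinate statement about $D^{-T}D$ and the geometric definition of a central-axial collineation.
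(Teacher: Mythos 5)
Your proof is correct and follows essentially the same route as the paper: pointwise equality of the projections forces $D^{-T}D$ to map every point to a point collinear with it and the projection centre $i$, i.e.\ to fix every line through $i$, which is the defining property of a central collineation (the axis coming for free), and the converse is immediate by taking $i$ as the centre. You are in fact more explicit than the paper about the passage from collinearity at the finitely many (freely liftable) vertices to the statement that \emph{all} lines through $i$ are fixed, a step the paper's proof glosses over.
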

\begin{proof}
The equivalence will be proven with the usual bipartite argument.
\paragraph{Periodicity $\implies$ central axial collineation}
Since vectors $\vec{D}p_k$ represent the planes of $\mathcal{D}$ by definition, the dual of $\mathcal{D}$ consists of the set of points given by  $D^{-T}D p_k$, which is a projective transformation of $\mathcal{P}$. If we want to see the same image, we have to have $\Gamma_1=C(\mathcal{P}) = C(D^{-T}D\mathcal{P})$. Point-wise this means $D^{-T}D p_k=i+\lambda_k' \bar{p}_k$ has to hold, for all $p_k$ with some $\lambda_k' \in \mathbb{R}\setminus \{ 0 \}$. This means all lines $p_k=i+\lambda_k \bar{p}_k$ (all lines passing through $i$) are fixed under $D^{-T}D$. (Since the dual of the dual of $\mathcal{D}$ consists of planes $\vec{D}D^{-T}Dp_k=DD^{-T}\vec{D}p_k=(D^{-T}D)^{-T}\vec{D}p_k$, $\Gamma_2=C(\mathcal{D}) = C(D^{-T}D\mathcal{D})$ would imply the same thing).
\paragraph{Central axial collineation $\implies$ periodicity}
It is trivial to see how the choice of the collineation centre as the projection centre $i$ satisfies the periodicity of the images.
\end{proof}

An illustration of such system can be seen in Figure \ref{fig:kep1}, with the underlying coordinates are given in Appendix \ref{sec:appb}. Similarly as polarities are a subset of dualities, polarities are a subset of the dualities satisfying Theorem \ref{thm:period}, as the identity (any polarity iterated twice) is a central axial collineation.

\begin{figure}[h]
\includegraphics[width=0.5\textwidth]{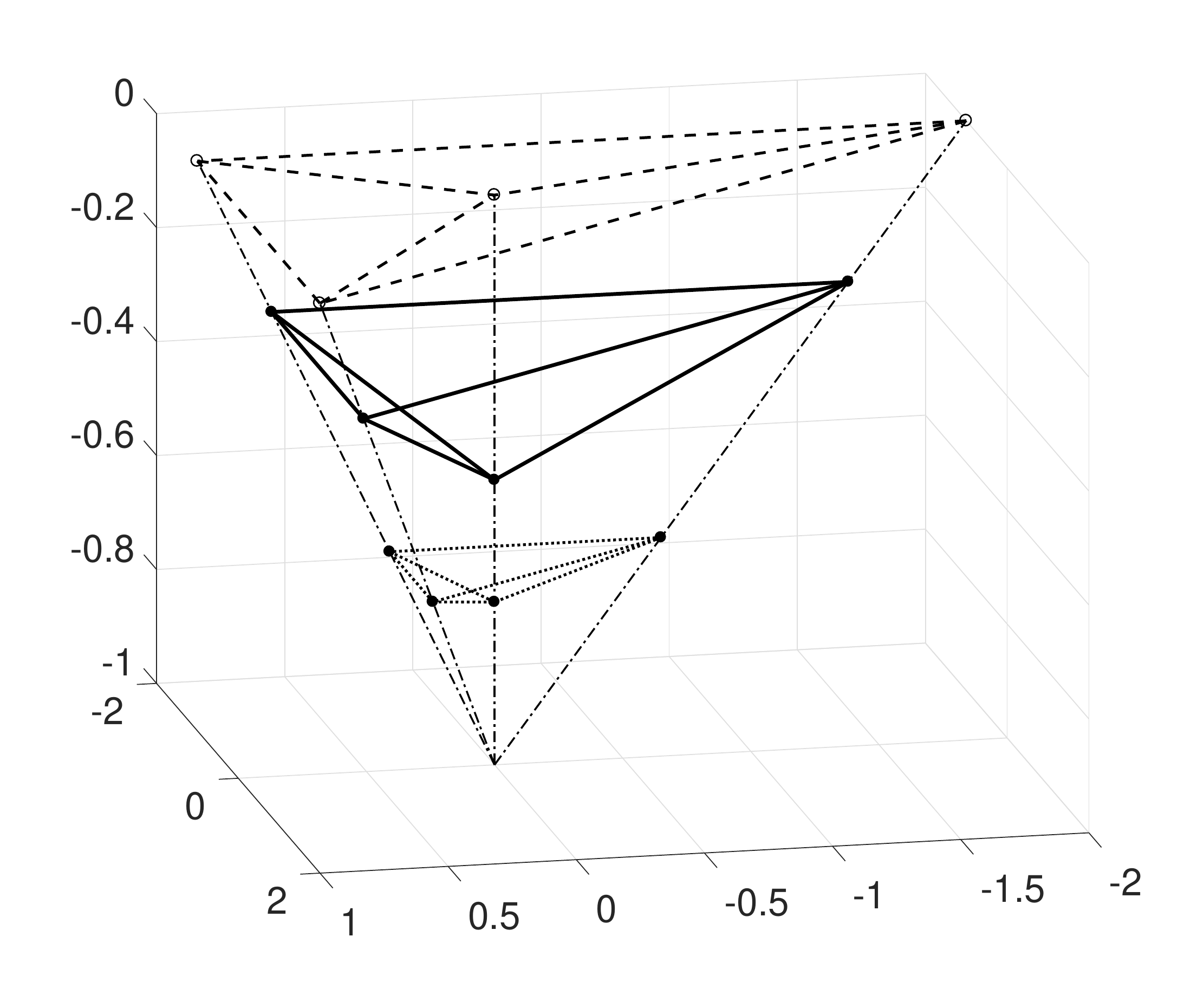}
\includegraphics[width=0.5\textwidth]{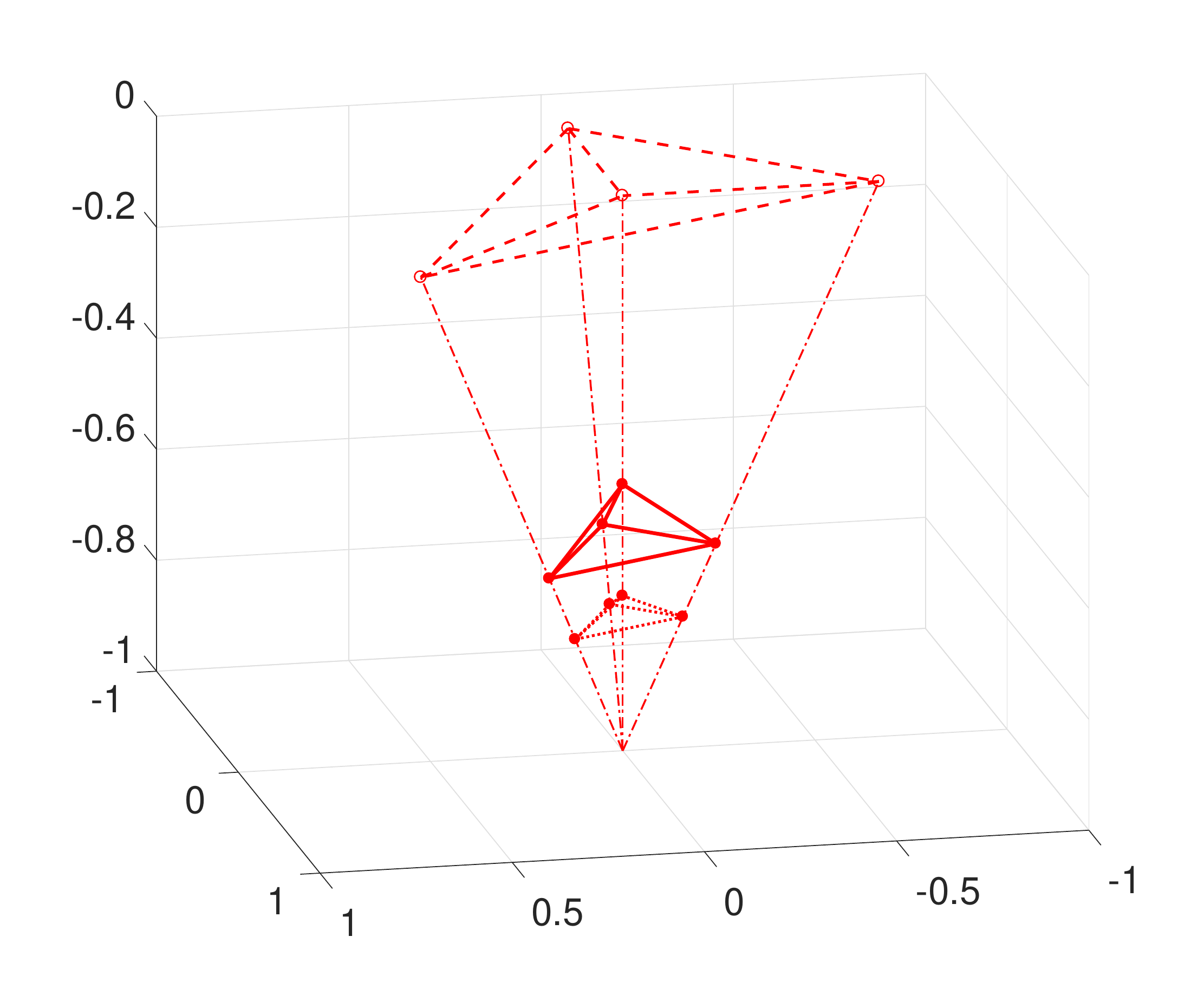}
\caption{Left: polyhedron $\mathcal{P}$ (solid lines), $D^{-T}D(\mathcal{P})$ (dotted lines) as well as their projecion. Right: polyhedron $\mathcal{D}$ (solid lines), $D^{-T}D(\mathcal{D})$ (dotted lines) as well as their projecion.}\label{fig:kep1}
\end{figure}


\begin{appendices}
\section{Maxwells method with homogenous coordinates}\label{sec:appa}
The tools Maxwell gave us are duality \begin{align*}
\vec{A}=
\begin{bmatrix}
1 & 0 & 0 & 0 \\
0 & 1 & 0 & 0 \\
0 & 0 & 0 & 1\\
0 & 0 & 1 & 0
\end{bmatrix}
\end{align*}
and projection \begin{align*}
C_{\infty}=
\begin{bmatrix}
1 & 0 & 0 & 0\\
0 & 1 & 0 & 0\\
0 & 0 & 0 & 1
\end{bmatrix}.
\end{align*}
The duality is often represented with a paraboloid of revolution, since the set of all points satisfying $x^T\vec{A}x=0$ form a paraboloid of revolution. Since the duality maps point $x$ to plane $\vec{A}x$, this means these points are incident with their duals, so called self-conjugate. Also, $\vec{A}x$ is a the tangent plane of the paraboloid at point $x$. The projection  $C_{\infty}$ is a parallel projection to a horizontal image plane, the centre of projection in the projective sense is point $i=(0,0,1,0)$.
Let us have the first diagram  $\Gamma_1 \in PG(2)$ given with its points as $u_0=(0,0,1)$, $u_1=(1,-1,1)$, $u_2=(1,2,1)$ and $u_3=(-2,-1,1)$. The image plane for our purposes will be $z=0$. We can embed $\Gamma_1$ into $PG(3)$ and batch up the vertices of the diagram for convenience by introducing

\begin{align}
\bar{\mathcal{P}}=
\left[
\begin{array}{c|c|c|c}
\bar{p}_0 & \bar{p}_1  & \bar{p}_2 & \bar{p}_3
\end{array}
\right]=\left[
\begin{array}{c|c|c|c}
0 & 1  & 1 & -2 \\
0 & -1 & 2 & -1 \\
0 & 0  & 0 & 0 \\
1 & 1  & 1 & 1
\end{array}
\right].\label{eq:barp}
\end{align} 
Then, we have to lift $\bar{\mathcal{P}}$ to form a (non-degenerate) polyhedron $\mathcal{P}$, which can be done element-wise, for instance with $p_k=\lambda_k\bar{p}_k+(1-\lambda_k)i$. As long as the result is non-degenerate, any combination of $\lambda_k$ is allowed, in this example we will use $\lambda_0=0.5$ and $\lambda_1=\lambda_2=\lambda_3=0.75$. With this we have
\begin{align}
\mathcal{P}=\left[
\begin{array}{c|c|c|c}
0 & 0.75  & 0.75 & -1.5 \\
0 & -0.75 & 1.5 & -0.75 \\
0.5 & 0.25  & 0.25 & 0.25 \\
0.5 & 0.75  & 0.75 & 0.75
\end{array}
\right] \sim
\left[
\begin{array}{c|c|c|c}
    0  &  1   &  1   & -2\\
    0  & -1   &  2   & -1\\
    1  &  1/3 &  1/3 &  1/3\\
    1  &  1   &  1   &  1
\end{array}
\right].\label{eq:Pset}
\end{align}
We can now take the dual of $\mathcal{P}$, keeping in mind that the set \begin{align}
\mathcal{F}:=\vec{A}\mathcal{P}=\left[
\begin{array}{c|c|c|c}
    0  &  1   &  1   & -2\\
    0  & -1   &  2   & -1\\
        1  &  1   &  1   &  1\\
        1  &  1/3 &  1/3 &  1/3
\end{array}
\right]
\end{align} is a set of planes. In order to get the vertices of the dual polyhedron $\mathcal{D}$ in this particular example, we will index the vertices of $\mathcal{D}$ such that vertex $d_k$ is incident with plane $f_l$ if $k\neq l$ ($\scal{d_k}{f_l}=0$ and $\scal{d_k}{f_k} \neq 0$). Such system can be efficiently solved by finding the (left) null-spaces of the matrices formed by different subsets of columns of $\mathcal{F}$. We arrive at
\begin{align}
\mathcal{D}=\left[
\begin{array}{c|c|c|c}
  0 &  -2/3    & 0 &  2/3 \\
  0 & 2/3 &  -2/3 & 0 \\
   -1/3 &  -1 &   -1 &   -1 \\
    1 & 1 & 1 & 1
\end{array}
\right]
\end{align} which can be projected down to the dual diagram
\begin{align}
\Gamma_2=C_{\infty}\mathcal{D}=\left[
\begin{array}{c|c|c|c}
  0 &  -2/3    & 0 &  2/3 \\
  0 & 2/3 &  -2/3 & 0 \\
    1 & 1 & 1 & 1
\end{array}
\right].
\end{align}
The illustration of this process is presented in Figure \ref{fig:mkep}.

\section{Example undoing of an affinity}\label{sec:appc}
Consider polyhedron $\mathcal{P}$ given in \eqref{eq:Pset} and affine transformation \begin{align}
U=\begin{bmatrix}
2/4 & 0 & 0 & 0\\
0 & 3/4 & 0 & 0\\
0 & 0 & 2 & -1 \\
0 & 0 & 0 & 1
\end{bmatrix}.
\end{align}
The transformation $U^{-1}$ maps $\mathcal{P}$ to
\begin{align}
U^{-1}\mathcal{P}=
\left[
\begin{array}{c|c|c|c}
    0  &  2   &  2   & -4\\
    0  &  -4/3 &  8/3 &  -4/3\\
    1  &  2/3   &  2/3   &  2/3\\
    1  &  1   &  1   &  1
\end{array}
\right].
\end{align} 
The original image plane $z=0$ of Appendix \ref{sec:appa} is mapped into $z=-1$. The ratio of  distortion $U^{-1}$ causes on this new image plane is  $(2/4)/(3/4)=2/3$. In order to undo this, we need an image plane with a normal vector parallel to $(-2/3,0,\sqrt{1-4/9})$. One such plane is given with homogenous coordinates $(-2/3,0,\sqrt{1-4/9},3)$ and the arising system of images is drawn in Figure \ref{fig:affin}. What is important to note is how the image in plane $(-2/3,0,\sqrt{1-4/9},3)$ is a uniformly scaled up version of the image in plane $z=0$ of Appendix \ref{sec:appa}.
\section{Reciprocal Example}\label{sec:appb}
Here a numerical example of a reciprocal system is provided. Consider duality 
\begin{align}
\vec{D}=\begin{bmatrix}
    0 &    -1 &     0 &      0\\
    1 &     0 &     0 &      0\\
    0 &     0 &  -0.75 &  -1.75\\
    0 &     0 &   0.25 &  -0.75\\
\end{bmatrix}
\end{align}
and camera matrix
\begin{align}
C=
\begin{bmatrix}
1 & 0 & 0 & 0\\
0 & 1 & 0 & 0\\
0 & 0 & 1 & 1
\end{bmatrix}.
\end{align}

Since we want to use a single projection and not the decomposition given in the proof of Theorem \ref{thm:jo}, we have to show that the images it gives are correct: that is they can be mapped to the known solutions through similarity transformations. For this purpose, the following lemma will be of use:
\begin{lemma}\label{lem:relative}
Given images $C_1(\mathcal{P})$ and $C_2(\mathcal{P})$ of polyhedron $\mathcal{P}$, the perspective transformation mapping the first to the second is given by $C_2C_1^+$ where $C_1^+$ is the pseudo-inverse of $C_1$.
\end{lemma}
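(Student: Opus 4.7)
The plan is a direct algebraic verification built on the defining property of the right pseudo-inverse. Since every $C\in\mathcal{C}$ is a $3\times 4$ matrix of rank 3, $C_1^+ = C_1^T(C_1 C_1^T)^{-1}$ is a $4\times 3$ matrix satisfying the right identity $C_1 C_1^+ = I_3$, while the product $C_1^+ C_1$ is the orthogonal projector of $\mathbb{R}^4$ onto $(\ker C_1)^\perp$. I would pick an arbitrary vertex $p\in\mathcal{P}$, set $q_1:=C_1 p$, and compute
\[
(C_2 C_1^+)\, q_1 \;=\; C_2\bigl(C_1^+ C_1\bigr)\, p .
\]

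Next I would interpret $C_1^+ C_1\,p$ geometrically. Because it is the orthogonal projection of $p$ onto $(\ker C_1)^\perp$, the residual $p - C_1^+ C_1\,p$ lies in $\ker C_1$; for a camera matrix this kernel is one-dimensional and spanned by the homogeneous coordinates of the projection centre $i_1$. Hence $C_1^+ C_1\,p = p - \alpha\, i_1$ for some scalar $\alpha=\alpha(p)$, and substituting gives $(C_2 C_1^+)\, q_1 = C_2 p - \alpha\, C_2 i_1$. This coincides with $C_2 p$ in $PG(2)$ for every vertex exactly when $C_2 i_1 = 0$, i.e.\ when $C_2$ shares the projection centre of $C_1$.

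This coincident-centre condition is the implicit hypothesis of the lemma: outside of it (and the degenerate planar-scene case in which $\mathcal{P}$ lies in the row space of $C_1$), two pinhole views of a non-planar $\mathcal{P}$ are related by an epipolar/fundamental correspondence rather than a single $3\times 3$ homography. Under the hypothesis the computation above shows that $C_2 C_1^+$ reproduces the point-wise correspondence $C_1 p \mapsto C_2 p$, and since a $3\times 3$ perspective transformation is determined by its action on four image points in general position, this identifies $C_2 C_1^+$ with the desired perspective transformation.

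The main obstacle is therefore conceptual rather than computational: one must recognise that ``the perspective transformation mapping the first image to the second'' presupposes such a transformation exists, which in turn constrains the camera geometry. Once that is articulated, the proof reduces to the one-line identity $C_1 C_1^+ = I_3$ together with the identification of $\ker C_1$ with the projection centre.
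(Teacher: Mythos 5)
Your proof is correct, but it cannot be ``the same as the paper's'' because the paper offers no proof at all: it simply defers to the reference \cite{xu2013epipolar}. Your self-contained verification via $C_1C_1^+=I_3$ and the identification of $\ker C_1$ with the projection centre is the standard argument and is sound. More importantly, your caveat is not a pedantic aside but a genuine observation about the lemma's scope: $(C_2C_1^+)(C_1p)=C_2p-\alpha(p)\,C_2i_1$, where $C_1^+C_1p = p - \alpha(p)i_1$ is the back-projection of $C_1p$ onto the plane $(\ker C_1)^\perp$, so $C_2C_1^+$ is the homography induced by that plane and agrees with the point-wise correspondence $C_1p\mapsto C_2p$ only when the cameras share a centre ($C_2i_1=0$) or the scene lies in that plane. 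For a non-planar polyhedron seen from two distinct centres no single $3\times3$ homography relates the two images, so ``the perspective transformation mapping the first to the second'' presupposes exactly the hypothesis you articulate. This matters for the paper's own use of the lemma in Appendix \ref{sec:appb}: there the cameras being compared are $C_0=C_\infty$, with centre $(0,0,1,0)$, and $CD^{-1}$, whose kernel is spanned by $(0,0,1,1)$ --- distinct centres --- so the computed matrix $CD^{-1}C_0^+$ is a plane-induced homography rather than a map carrying one image of the (non-planar) polyhedron onto the other. Your proof thus buys more than the citation does: it makes explicit the hypothesis under which the formula is valid and flags where that hypothesis needs to be checked.
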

The proof of this can be found in \cite{xu2013epipolar}. Now, using the decomposition $\vec{D}=Id\vec{Id}D$, and the images introduced in the proof of Lemma \ref{lem:inf} we can have
\begin{align}
CD^{-1}C_0^+=\begin{bmatrix}
     0  &   1  &   0\\
    -1  &   0  &   0\\
     0  &   0  &  -1
\end{bmatrix} \text{ and } CIdC_{\pi/2}^+=\begin{bmatrix}
     0 &    1  &   0\\
     -1 &    0  &   0\\
     0 &    0  &   1
\end{bmatrix}
\end{align}
both of which is in $\mathcal{S}$, meaning we have good solutions. 
As for the periodicity condition, we have 
\begin{align}
D^{-T}D=\begin{bmatrix}
   -1 & 0 & 0 & 0\\
    0 & -1 & 0 & 0\\
    0 &  0  & 0.5 & 1.5\\
    0 & 0  & -1.5 & -2.5
\end{bmatrix}
\end{align}
with $-1$ as a 4-fold eigenvalue. It has only 3 eigenvectors, as the center is on the fixed plane, thus the transformation is an elation: the centre $i=(0,0,-1,1)$ is in the fixed plane $z=-1$. As for the polyhedrons, the process given in Appendix \ref{sec:appa} can be repeated with the different projection centre and duality. The whole system can be seen in Figure \ref{fig:kep1}.

\end{appendices}
\bibliographystyle{unsrt} 
\bibliography{nempolarbib}

\end{document}